\documentclass[a4paper]{panle}
\usepackage{url}
\usepackage{cite}
\usepackage{wrapfig}
\usepackage{graphicx}
\usepackage{amssymb}
\usepackage{amsfonts}
\usepackage{amsmath}
\usepackage{longtable}
\usepackage{rotating}
\usepackage{lscape}
\usepackage{epsfig}
\usepackage{multirow}
\usepackage{bm}
\usepackage{amsthm}
\usepackage{physics}

\usepackage{lineno}

\newtheorem{lemma}{Lemma}
\newtheorem{theorem}{Theorem}

\originalTeX
\begin{document}

\title{Quantum Hashing with QKD States}
\maketitle
\authors{A.\,V.~Vasiliev\,$^{a,b,}$\footnote{E-mail: vav.kpfu@gmail.com}, I.\,G.~Zinnatullin\,$^{a,b,}$\footnote{E-mail: IlnGZinnatullin@kpfu.ru}}
\setcounter{footnote}{0}
\from{$^{a}$\,Kazan Federal University, Kazan, Russian Federation}
\from{$^{b}$\,Zavoisky Physical-Technical Institute, FRC Kazan Scientific Center of RAS, Kazan, Russian Federation
}


\begin{abstract}

Quantum hashing is a well-known technique. In this paper, we present a new construction of a quantum hash function based on binary error-correcting codes. The construction of the quantum hash strongly resembles the quantum state preparation in QKD protocols (e.g., BB84 protocol) and can be implemented on current QKD-devices.
\end{abstract}
\vspace*{6pt}

\noindent
PACS: 03.67.Dd, 03.67.Ac
\label{sec:intro}
\section*{INTRODUCTION}

Quantum hashing is the technique that allows for verification of the hashed value without disclosing it. In previous papers, we presented different versions of quantum hash functions and gave provable bounds for their cryptographic resistance (for example, see \cite{AKVZ:2025:Theory-of-QHashing}). 

However, the physical implementation of the most efficient quantum hash functions requires advanced quantum engineering and might be relevant for the future quantum devices beyond NISQ era.

Therefore, in this paper, we present a new quantum hash function that can be fully implemented on the basis of existing hardware for the quantum key distribution (QKD) protocol. In this new proposal, we use only the states from the original BB84 protocol \cite{BB84}, and thus quantum procedures for generating and verifying quantum hash already exist in commercial hardware. So, the update required to implement quantum hashing by such systems is in their classical part.

\label{sec:preparation}
\section{QUANTUM HASHING BASED ON ECC}

We call a binary linear code \(C\) of length \(m\), dimension \(n\), and minimum distance \(d\) an \([m, n, d]\)-code. This code can be represented as
\[
    C = \{w(x) = (w^{x}_{1}, \ldots, w^{x}_{m}) \in \{0, 1\}^{m} \;|\; x \in \{0, 1\}^{n}\},
\]
where \(w(x)\) denotes the codeword corresponding to \(x\).

Let \(k\) be the number of qubits. Let \(C\) be the \([m, n, d]\)-code with \(m = 2k\) and \(G\) be its generator matrix. For \(x \in \{0, 1\}^{n}\) we compute its codeword
\[
    w(x) = xG = (w^{x}_{1}, \ldots, w^{x}_{m}) = (v^{x}_{1}, \ldots, v^{x}_{k}, b^{x}_{1}, \ldots, b^{x}_{k}) \in \{0, 1\}^{m}
\]
and define its quantum hash composed of \(k\) single qubits as follows:
\[
    \ket{\psi_{C}(x)} = \bigotimes_{i=1}^{k} \ket{\psi^{i}_{C}(x)}
\]
where
\[
    \ket{\psi^{i}_{C}(x)} = H^{b^{x}_{i}}X^{v^{x}_{i}}\ket{0} =
    \begin{cases}
        \ket{0},& \text{if } v^{x}_{i} = 0 \text{ and } b^{x}_{i} = 0,\\
        \ket{1},& \text{if } v^{x}_{i} = 1 \text{ and } b^{x}_{i} = 0,\\
        \ket{+} = \frac{\ket{0} + \ket{1}}{\sqrt{2}},& \text{if } v^{x}_{i} = 0 \text{ and } b^{x}_{i} = 1,\\
        \ket{-} = \frac{\ket{0} - \ket{1}}{\sqrt{2}},& \text{if } v^{x}_{i} = 1 \text{ and } b^{x}_{i} = 1,\\
    \end{cases}
\]
for \(1 \leq i \leq k\). Here \(X\) is the Pauli-\(X\) gate, \(H\) is the Hadamard gate. Thus, the second half of the codeword \(w(x)\) is used to select the basis, whereas the first half is used to select the quantum state (basis vector) in the chosen basis.

We define the collision threshold for the function \(\psi_{C}\) by
\[
    F(\psi_{C}) = \max_{x \neq y} F(\ket{\psi_{C}(x)}, \ket{\psi_{C}(y)}),
\]
where \(F\) denotes the fidelity defined as \(F(\ket{\psi_{C}(x)}, \ket{\psi_{C}(y)}) = |\braket{\psi_{C}(x)}{\psi_{C}(y)}|^{2}\).

\begin{lemma}\label{lemma:collision-resistance}
Let \(C\) be the \([m, n, d]\)-code with \(m = 2k\). The collision bound of the quantum hash function \(\psi_{C}\) is the following:
\[
    F(\psi_{C}) \leq \left( \frac{1}{2} \right)^{\lceil \frac{d}{2} \rceil}.
\]
\end{lemma}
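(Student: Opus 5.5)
The plan is to exploit the tensor-product structure. For \(x \neq y\) the inner product factorizes as
\[
\braket{\psi_{C}(x)}{\psi_{C}(y)} = \prod_{i=1}^{k} \braket{\psi^{i}_{C}(x)}{\psi^{i}_{C}(y)},
\]
so the fidelity is the product of the single-qubit fidelities \(|\braket{\psi^{i}_{C}(x)}{\psi^{i}_{C}(y)}|^{2}\). First I will tabulate these single-qubit fidelities, which are determined by the pairs \((v^{x}_{i}, b^{x}_{i})\) and \((v^{y}_{i}, b^{y}_{i})\).

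There are three cases, comparing \(\{\ket{0},\ket{1},\ket{+},\ket{-}\}\) pairwise:
\begin{enumerate}
\item If \(b^{x}_{i} = b^{y}_{i}\) and \(v^{x}_{i} = v^{y}_{i}\), the states coincide and the fidelity is \(1\).
\item If \(b^{x}_{i} = b^{y}_{i}\) but \(v^{x}_{i} \neq v^{y}_{i}\), the states are orthogonal and the fidelity is \(0\).
\item If \(b^{x}_{i} \neq b^{y}_{i}\), one state is a computational basis vector and the other a Hadamard basis vector, so the fidelity is \(1/2\) regardless of the \(v\)'s.
\end{enumerate}
So the factor at position \(i\) is \(0\) whenever some coordinate differs with the basis bits equal. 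It is \(1/2\) whenever the basis bits differ (whether one or both of the coordinates \(i\) and \(k+i\) of the codewords differ). It is \(1\) only when both coordinates agree.

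Next I reduce to a counting problem. If any factor is \(0\) the bound holds trivially, so assume none is. Then every position \(i\) at which \(w(x)\) and \(w(y)\) differ in either coordinate \(i\) or \(k+i\) must have \(b^{x}_{i} \neq b^{y}_{i}\), and each such position contributes exactly \(1/2\). Let \(t\) be the number of indices \(i \in \{1,\ldots,k\}\) with \((v^{x}_{i}, b^{x}_{i}) \neq (v^{y}_{i}, b^{y}_{i})\). The fidelity is then \((1/2)^{t}\).

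Since \(x \neq y\) and \(C\) is linear with a generator matrix of full rank \(n\), the codewords are distinct and the Hamming distance between \(w(x)\) and \(w(y)\) is at least \(d\). Each of the \(t\) indices accounts for at most \(2\) differing coordinates (namely \(i\) and \(k+i\)), so \(d \le 2t\), giving \(t \ge \lceil d/2 \rceil\). Taking the maximum over \(x \neq y\) yields the claimed bound.

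The only delicate step is the bookkeeping that pairs coordinate \(i\) with coordinate \(k+i\): one must check that a pair differing in both coordinates still contributes only \(1/2\), not \(1/4\). This is exactly what forces the ceiling of \(d/2\) rather than \(d\). Beyond that, the argument just needs the single-qubit fidelity table and the fact that distinct messages give distinct codewords, which uses injectivity of \(x \mapsto xG\).
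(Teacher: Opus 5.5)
Your proposal is correct and follows essentially the same route as the paper. Both arguments factorize the overlap over the $k$ qubits, use the single-qubit case table to bound each differing index's squared overlap by $1/2$, and observe that each such index accounts for at most two differing codeword coordinates ($i$ and $k+i$), which forces at least $\lceil d/2 \rceil$ of them; your separate handling of the zero-factor case (to get exactly $(1/2)^t$) is only a cosmetic variant of the paper's direct per-index bound of $1/2$.
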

\begin{proof}
Let \(x, y \in \{0, 1\}^{n}\) with \(x \neq y\). By construction of the quantum hash
\[
    \braket{\psi_{C}(x)}{\psi_{C}(y)} = \prod_{i=1}^{k} \braket{\psi^{i}_{C}(x)}{\psi^{i}_{C}(y)}.
\]
The corresponding codewords are
\[
    w(x) = xG = (v^{x}_{1},\ldots, v^{x}_{k},b^{x}_{1},\ldots, b^{x}_{k}),
\]
\[
    w(y) = yG = (v^{y}_{1},\ldots, v^{y}_{k},b^{y}_{1},\ldots, b^{y}_{k}).
\]
Since \(x \neq y\) and the code \(C\) has the minimum distance \(d\), we have
\[
    d_{H}(w(x), w(y)) \geq d,
\]
where \(d_{H}\) denotes the Hamming distance. Consider the pairs \((v^{x}_{i}, b^{x}_{i})\) and \((v^{y}_{i}, b^{y}_{i})\) for \(1 \leq i \leq k\). If \((v^{x}_{i}, b^{x}_{i}) \neq (v^{y}_{i}, b^{y}_{i})\), we deal with the following three cases:
\begin{enumerate}
    \item \(v^{x}_{i} \neq v^{y}_{i}, b^{x}_{i} = b^{y}_{i} \Rightarrow \left|\braket{\psi^{i}_{C}(x)}{\psi^{i}_{C}(y)}\right| = 0\),
    \item \(v^{x}_{i} = v^{y}_{i}, b^{x}_{i} \neq b^{y}_{i} \Rightarrow \left|\braket{\psi^{i}_{C}(x)}{\psi^{i}_{C}(y)}\right| = \frac{1}{\sqrt{2}}\),
    \item \(v^{x}_{i} \neq v^{y}_{i}, b^{x}_{i} \neq b^{y}_{i} \Rightarrow \left|\braket{\psi^{i}_{C}(x)}{\psi^{i}_{C}(y)}\right| = \frac{1}{\sqrt{2}}\).
\end{enumerate}
Therefore, if \((v^{x}_{i}, b^{x}_{i}) \neq (v^{y}_{i}, b^{y}_{i})\), then
\[
    \left|\braket{\psi^{i}_{C}(x)}{\psi^{i}_{C}(y)}\right|^{2} \leq \frac{1}{2}.
\]

Let \(\mathcal{I} = \{i \;|\; (v^{x}_{i}, b^{x}_{i}) \neq (v^{y}_{i}, b^{y}_{i}), 1 \leq i \leq k\}\). Note that each index \(i \in \mathcal{I}\) contributes at most \(2\) to the \(d_{H}(w(x), w(y))\). Hence
\[
    d \leq d_{H}(w(x),w(y)) \leq 2|\mathcal{I}|,
\]
which implies \(|\mathcal{I}| \geq \lceil d/2 \rceil\).

Consequently, we have
\[
    \left|\braket{\psi_{C}(x)}{\psi_{C}(y)}\right|^{2} = \prod_{i=1}^{k} \left|\braket{\psi^{i}_{c}(x)}{\psi^{i}_{C}(y)}\right|^{2} \leq \left(\frac{1}{2}\right)^{|\mathcal{I}|} \leq \left(\frac{1}{2}\right)^{\lceil \frac{d}{2} \rceil}.
\]

Finally, we take the maximum over all distinct inputs and obtain
\[
    F(\psi_{C}) \leq \left( \frac{1}{2} \right)^{\lceil \frac{d}{2} \rceil}.
\]
\end{proof}

\section{Quantum Hashing Based on BCH Codes}

In this subsection, we construct an \([m, n, d]\)-code \(C\) with \(d \geq 2\lceil \log n \rceil + 1\) and \(m = 2k\), where \(k = \lceil n/2 \rceil + O(\log^{2} n)\). The lower bound on \(d\) yields the collision bound \(F(\psi_{C}) \leq 1/(2n)\). The asymptotic bound on \(k\) yields the pre-image resistance of \(\psi_{C}\).

Let \(n \geq 17\). We set \(l = \lceil \log n \rceil \geq 5\) and \(t = l + 1\). Consider a binary primitive narrow-sense  BCH code \(C_{\mathrm{BCH}}\) of length \(M = 2^{t} - 1\) and designed distance \(\delta = 2l + 1\). We denote the dimension of \(C_{\mathrm{BCH}}\) by \(N\) and its minimum distance by \(D\). It is known that the following bounds hold for \(C_{\mathrm{BCH}}\) \cite[Theorem 1, page 258]{MacWilliams-Sloane:1978:The-Theory-of-Error-Correcting-Codes}:
\[
    N \geq M - tl,\quad D \geq \delta.
\]

We can verify that \(N > n\). Since \(l \geq 5\), we have \(l(l+1) + 1 < 2^{l}\). Therefore,
\[
    N \geq M - tl = 2^{l+1} - (l(l+1) + 1) > 2^{l} \geq n.
\]

Let \(\mathcal{I} \subseteq \{1, \ldots, M\}\) be an information set of \(C_{\mathrm{BCH}}\), that is, a set of \(N\) coordinate positions that carry the information. We select \(\mathcal{J} \subseteq \mathcal{I}\) with \(|\mathcal{J}| = N - n > 0\) and shorten \(C_{\mathrm{BCH}}\) in the coordinates indexed by \(\mathcal{J}\):
\[
    C^{\prime} = \{(w_{i})_{i \notin \mathcal{J}} \;|\; w \in C_{\mathrm{BCH}}, w_{j} = 0 \text{ for all } j \in \mathcal{J}\}.
\]
Since \(\mathcal{J} \subseteq \mathcal{I}\), the shortening reduces both the length and the dimension by \(|\mathcal{J}|\) and does not decrease the minimum distance. Hence \(C^{\prime}\) has dimension \(n^{\prime} = N - (N - n) = n\), length
\[
    m^{\prime} = M - (N - n) = n + (M - N) \leq n + tl,
\]
and minimum distance \(d^{\prime} \geq D \geq 2l + 1\).

It remains to ensure that the code length is even. If \(m^{\prime}\) is even, then we set \(C = C^{\prime}\). If \(m^{\prime}\) is odd, then we obtain \(C\) by appending an additional zero coordinate to every codeword of \(C^{\prime}\). This operation preserves both the dimension and the minimum distance. Consequently, \(C\) is an \([m, n, d]\)-code with
\[
    m = 2 \left\lceil\frac{m^{\prime}}{2}\right\rceil,\quad d = d^{\prime} \geq 2l + 1.
\]

Since \(m = 2k\) and \(m' \geq n\), we have
\begin{equation}\label{eq:condition_k_value}
    \left\lceil \frac{n}{2} \right\rceil
    \leq k = \left\lceil \frac{m^{\prime}}{2} \right\rceil
    \leq \left\lceil \frac{n+l(l+1)}{2} \right\rceil
    \leq \left\lceil \frac{n}{2} \right\rceil + \frac{l(l+1)}{2}.
\end{equation}
Since \(l = \lceil \log n \rceil\), Eq.~\eqref{eq:condition_k_value} implies
\[
    k = \left\lceil \frac{n}{2} \right\rceil + O(\log^{2} n).
\]

Furthermore,
\begin{equation}\label{eq:condition_l_value}
    \left(\frac{1}{2}\right)^{l} = \left(\frac{1}{2}\right)^{\lceil \log n \rceil} \leq \left(\frac{1}{2}\right)^{\log n} \leq \frac{1}{n}.
\end{equation}

The bound on \(d\) gives
\begin{equation}\label{eq:condition_d_value}
    \left\lceil \frac{d}{2} \right\rceil \geq l + 1.
\end{equation}

By Lemma~\ref{lemma:collision-resistance} and Eqs.~\eqref{eq:condition_d_value} and \eqref{eq:condition_l_value} we obtain
\[
    F(\psi_{C}) \leq \left(\frac{1}{2}\right)^{\lceil \frac{d}{2} \rceil} \leq \left(\frac{1}{2}\right)^{l + 1} \leq \frac{1}{2n}.
\]

To ensure \(k < n\), it suffices by Eq.~\eqref{eq:condition_k_value} to require
\begin{equation}\label{eq:k_less_n_condition_1}
   \left\lceil \frac{n+l(l+1)}{2} \right\rceil  \leq n-1.
\end{equation}
In fact Eq.~\eqref{eq:k_less_n_condition_1} is equivalent to
\begin{equation}\label{eq:k_less_n_condition_2}
    n \geq l(l+1) + 2.
\end{equation}
Eq.~\eqref{eq:k_less_n_condition_2} holds for \(n = 32\) and \(n \geq 44\).

\begin{theorem}
Let \(n = 32\) or \(n \geq 44\) and \(l = \lceil \log n \rceil\). There exists an \([m, n, d]\)-code \(C\) with \(m = 2k\), \(d \geq 2\lceil \log n \rceil + 1\), and
\[
    \left\lceil \frac{n}{2} \right\rceil \leq k \leq \left\lceil \frac{n}{2} \right\rceil + \frac{l(l+1)}{2} < n.
\]
This code \(C\) gives rise to the quantum hash function \(\psi_{C}\) that maps \(n\)-bit inputs into \(k\)-qubit outputs with
\[
    k = \left\lceil\frac{n}{2}\right\rceil + O(\log^{2} n) < n,\quad F(\psi_{C}) \leq \frac{1}{2n}.
\]
\end{theorem}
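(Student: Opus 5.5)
The plan is to assemble the construction of the shortened BCH code given just before the statement and to check that every hypothesis used there holds under the assumption $n = 32$ or $n \geq 44$.

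First, note that both cases satisfy $n \geq 17$, so $l = \lceil \log n \rceil \geq 5$ and the construction applies. Take the primitive narrow-sense BCH code of length $M = 2^{l+1}-1$ and designed distance $2l+1$. By the cited bound, its dimension satisfies $N \geq M - tl > 2^l \geq n$. Shorten it on $N-n$ positions of an information set, which gives a code $C'$ of dimension $n$, length $m' \leq n + l(l+1)$, and distance at least $2l+1$. If $m'$ is odd, pad each codeword with one zero coordinate to get an even length $m = 2k$. This yields the $[m,n,d]$-code $C$ with $d \geq 2\lceil \log n\rceil + 1$.

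The chain $\lceil n/2\rceil \leq k \leq \lceil n/2\rceil + l(l+1)/2$ is Eq.~\eqref{eq:condition_k_value}. It uses $n \leq m' \leq n + l(l+1)$ and the fact that $l(l+1)$ is even, so $\lceil (n + l(l+1))/2\rceil = \lceil n/2\rceil + l(l+1)/2$.

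The only genuinely new point is the strict inequality $k < n$, which I expect to be the main (though small) obstacle. It suffices to show $\lceil n/2\rceil + l(l+1)/2 \leq n-1$. Since $l(l+1)/2$ is an integer, this is equivalent to $l(l+1)/2 \leq \lfloor n/2\rfloor - 1$, i.e.\ $l(l+1) + 2 \leq 2\lfloor n/2 \rfloor$. Because $l(l+1)+2$ is even, this holds exactly when $n \geq l(l+1)+2$, which is Eq.~\eqref{eq:k_less_n_condition_2}. I would verify this range by range:
\begin{itemize}
\item for $n = 32$ we have $l = 5$, and $32 \geq 32$;
\item for $n \in [33,64]$ we have $l = 6$, so we need $n \geq 44$, which gives exactly $44 \leq n \leq 64$;
\item for $n \in (2^{l-1}, 2^l]$ with $l \geq 7$, we have $n > 2^{l-1} \geq l(l+1)+2$.
\end{itemize}
The last inequality follows by a one-line induction on $l$, starting from $64 \geq 58$ at $l = 7$. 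Since $l \leq \log n + 1$, the same estimate gives $k = \lceil n/2\rceil + O(\log^2 n)$.

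Finally, the fidelity bound comes from Lemma~\ref{lemma:collision-resistance}. Since $d \geq 2l+1$, we have $\lceil d/2\rceil \geq l+1$, so $F(\psi_C) \leq 2^{-(l+1)} \leq 2^{-\log n}/2 = 1/(2n)$. The map $x \mapsto \ket{\psi_C(x)}$ takes $n$-bit inputs to $k$-qubit states by definition of $\psi_C$, which completes the statement.
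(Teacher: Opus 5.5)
Your proposal is correct and follows the paper's own argument step by step: the shortened, zero-padded BCH code, Eq.~\eqref{eq:condition_k_value} for the bounds on $k$, Lemma~\ref{lemma:collision-resistance} combined with $\lceil d/2\rceil \geq l+1$ for the fidelity, and the reduction of $k<n$ to $n \geq l(l+1)+2$. Your range-by-range check of that last condition, including the induction $2^{l-1} \geq l(l+1)+2$ for $l \geq 7$, supplies a verification that the paper only asserts.
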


\bigskip

\section*{CONCLUSIONS}

In this work we present a new construction of a quantum hash function based on error-correcting codes. We show that there exists a code that allows to ensure the desired cryptographic properies. We also note that the implementation of the corresponding states is relatively easy on current quantum hardware.

\section*{FUNDING}
The research has been supported by Russian Science Foundation Grant 25-11-00366.


\section*{CONFLICT OF INTEREST}
The authors of this work declare that they have no conflicts of interest.

\bibliographystyle{pepan}
\bibliography{references}

\end{document}